\DeclareMathOperator{\E}{\mathbb{E}}
\begin{document}
\title{Representing Higher-Order Networks\\with Spectral Moments}
%
%
\author{Hao Tian\inst{1}\faIcon{envelope}\and
Shengmin Jin\inst{2}\thanks{The work was done prior to the author joining Amazon} \and
Reza Zafarani\inst{1}}
\authorrunning{H.Tian et al.}
%
\institute{Data Lab, EECS Department, Syracuse University, Syracuse, NY 13244, USA\\\email{\{haotian,reza\}@data.syr.edu} \and
Amazon.com Inc., New York, USA\\
\email{\{shengmij\}@amazon.com}}
\maketitle              
\begin{abstract}
   The spectral properties of traditional (\textit{dyadic}) graphs, where an edge connects exactly two vertices, are widely studied in different applications. These spectral properties are closely connected to the structural properties of dyadic graphs. We generalize such connections and characterize higher-order networks by their spectral information. We first split the higher-order graphs by their ``edge orders" into several uniform hypergraphs. For each uniform hypergraph, we extract the corresponding spectral information from the transition matrices of carefully designed random walks. From each spectrum, we compute the first few spectral moments and use all such spectral moments across different ``edge orders" as the higher-order graph representation. We show that these moments not only clearly indicate the return probabilities of random walks but are also closely related to various higher-order network properties such as degree distribution and clustering coefficient. Extensive experiments show the utility of this new representation in various settings. For instance, graph classification on higher-order graphs shows that this representation significantly outperforms other techniques.

\keywords{Higher-order networks \and Spectral properties.}
\end{abstract}
\section{Introduction}
Graphs are natural representations of interactions among entities. With the demand to model more complex data, the topological structures of graphs are also enriched. For example, in basic undirected graphs, directions or edge weights can be further added to ``featurize" the interactions, leading to \textit{directed} or \textit{weighted} graphs. More recently, types of vertices and edges have been further expanded. Furthermore, many interactions go beyond pairwise in the real world, such as coauthorships or group activities. To model these interactions, the graphs are extended to \textit{Higher-Order Networks (HON)} (see a survey in \cite{10.1145/3682112.3682114}) in which each edge captures a \underline{set} of interacting nodes. Such representations allow edges to have a flexible \textit{order/size} (number of nodes in the edge) in a network. Existing mathematical topologies such as \textit{simplicial complexes} and \textit{hypergraphs} can be used directly for modeling higher-order networks. 

Although ways to model higher-order interactions are natural, exploring and mining higher-order networks is much more difficult than traditional \textit{dyadic} graphs, which have edges between pairs of nodes. Directly exploring higher-order graphs has many obstacles, including computational complexity and structural sparsity. In an effort to address these difficulties, current research on higher-order networks can be roughly divided into two directions. One is to study higher-order patterns in dyadic graphs, denoted as \textit{network motifs}~\cite{Milo824}. Given a specific motif (small subgraph), higher-order patterns can be learned by counting the motif in the graph~\cite{leskovec2010signed} or identifying the nodes that are part of the 
motif~\cite{DBLP:journals/corr/BensonGL16a}. However, one cannot distinguish actual higher-order interactions from a combination of pairwise interactions in a dyadic graph. Another branch of studies directly models higher-order networks as hypergraphs, but these studies are often limited to, or \textit{downgrade} to, some fixed upper bound on the size/order of the edge. 

 \vspace{2mm}\noindent \textbf{The present work: Spectral Representation of Higher-Order Networks.} Our goal is to characterize higher-order networks by extracting and studying their spectral information. In dyadic graphs, the spectral moments of the random walk transition matrix can provide informative machine learning features~\cite{inproceedingsJin}. These spectral moments not only clearly capture the return probabilities of corresponding walks but are also closely related to various network properties such as degree distribution and clustering coefficient. In higher-order networks, random walks can be defined in multiple ways. To obtain a computationally feasible transition matrix, we rely on a strict definition of an \textit{$s$-walk} in higher-order graphs~\cite{lu2011highordered}. More specifically, within a given order $r$ (edges consisting of $r$ nodes), an $s$-walk ($s<r$) consists of a series of order-$r$ edges where any two consecutive edges in the walk overlap on $s$ nodes. The attractive part of this definition is that \textit{there exists an equivalent traditional walk for any $s$-walk in some corresponding weighted dyadic graphs}. By considering the random walk transition matrices for all possible $r$ and $s$ and computing their spectral moments, we obtain a single feature vector to represent the entire higher-order graph. Note that there exist a few studies of node representation learnings for higher-order networks such as~\cite{10.1145/3184558.3186900} (see Section \ref{sec:related}); however, these studies are completely different from ours, as they focus on the similarities between vertices in these graphs. 

In summary, the main contributions of this work are as follows:\vspace{-2mm}
\begin{itemize}
    \item To the best of our knowledge, we present the first work to represent higher-order networks in their entirety with their spectral information;
    \item We prove how spectral moments capture properties of the original higher-order graph, e.g., degrees and clustering properties; and
    \item Through extensive experiments on higher-order graph classification, we verify the effectiveness of such representations. We further assess the importance of the number of spectral moments utilized and the graph sizes in the proposed representation. 
\end{itemize}


\section{Related Work}
\label{sec:related}
\textbf{Higher-Order Spectral Graph Theory.} 
Laplacian for $r$-uniform hypergraphs was first defined by Chung~\cite{chung1993laplacian} in 1993. In that setting, two edges are `adjacent' when they share exactly $(r-1)$ nodes, a property that can be represented by a new `adjacency' matrix. This definition was extended by Lu and Peng~\cite{lu2011highordered} to work for any number of shared nodes other than $(r-1)$, where each $r$-uniform hypergraph can have $(r-1)$ Laplacians to account for $1$-walks to $(r-1)$-walks. 

\vspace{2mm}\noindent\textbf{Dyadic Graph Representation Learning.}
There exist many techniques to measure the similarity between two dyadic graphs. One main branch is \textit{kernel methods}~\cite{DBLP:journals/corr/abs-1903-11835}, which can compute pairwise similarities between graphs, and the corresponding kernel matrix can be used as a representation for graph classification. Another branch is representation learning methods that extract feature vectors from subgraph patterns~\cite{DBLP:journals/corr/NarayananCVCLJ17} or some target-guided attributes~\cite{pmlr-v48-niepert16}. However, all such methods generate relative matrices/embeddings, having the following drawbacks: (1) the derived single feature vector is not explainable, and (2) the graph representation cannot be easily extended to larger graphs. 

There also exist some adhoc representation learning methods. To represent a graph with features, spectral properties are widely used as they are order-invariant and are closely related to the structural properties, such as the Laplacian spectrum~\cite{DBLP:journals/corr/abs-1912-00735}, spectral moments~\cite{jin2022interpretable,inproceedingsJin} and spectral paths (variations of moments from sampled subgraphs)~\cite{10.1145/3534678.3539433}. Calculating spectral features for a single graph is computationally efficient and hence, we explore the possibility of extending such studies to higher-order graphs in this work.\vspace{-3mm}

\section{Preliminaries}
\label{sec:perliminaries}
\textbf{Notations.}
We briefly summarize some notations used throughout this paper. 
\begin{center}
\begin{adjustbox}{width=0.8\textwidth}
\begin{tabular}{ |ll| } 
 \hline
 $r$ -- order of uniform hypergraphs & $m$ -- spectral moments \\
 $s$ -- order of random walks & $l$ -- order of spectral moments\\
 $\mu$ -- eigenvalues of normalized Laplacian matrix &$k$ -- lengths of vectors or walks\\
 $\lambda$ -- eigenvalues of random walk transition matrix&$G$ -- dyadic graph\\
 $\mathcal{G}$ -- higher-order graph&\\
 \hline
\end{tabular}
\end{adjustbox}
\end{center}

\vspace{2mm}\noindent \textbf{Hypergraph.}
We use $\mathcal{G} = (V, E)$ to denote a hypergraph, where the vertex set $V = \{v_1, v_2, \dots,v_n\}$ is the set of vertices. The edge set $E=\{e|e\subseteq V \}$ is the set of subsets of $V$. We only consider undirected hypergraphs. 

An \textit{$r$-uniform} hypergraph is one where each edge contains $r$ vertices. 

\vspace{2mm} \noindent \textbf{Higher-Order Network.} Slightly different from the hypergraph, we define the higher-order network as follows:
\begin{align*}
	\mathcal{G} = (V, E_{1}, E_{2}, \dots),
\end{align*}
where $V$ is the set of vertices, and  $E_i$,
$$E_{i}\subseteq\{(x_1,\dots,x_i) \;|\; (x_1,\dots,x_i) \in V^i \text{ and } x_1 \neq \dots \neq x_i\}$$ 
is the set of order-$i$ edges. Each \textit{layer} $(V,E_r)$ is an \textit{$r$-uniform} hypergraph, which is a subgraph of $\mathcal{G}$.

\vspace{1mm} \noindent \textbf{Higher-Order Graph Representation Learning.} The representation learning task aims to generate $k$ dimensional vectors from the whole graph, i.e., learn a mapping: $\mathcal{G} \rightarrow \mathbb{R}^k$. Such representations should preserve graph properties and can be utilized in downstream machine learning tasks.  \vspace{-2mm}


\section{Spectral Moments as Graph Representations}
\label{sec:theory}
We first briefly review the spectral properties of dyadic graphs and analyze how spectral moments are related to their properties. Next, we extend these results to spectral moments of higher-order networks by studying Laplacians of equivalent dyadic graphs. Finally, we prove that these moments provide upper bounds for certain higher-order graph properties. \vspace{-2mm}

\subsection{Spectral Moments of Dyadic Graphs}
\vspace{-2mm}\textbf{Nomalized Laplacian Matrix.} For dyadic graph $G$, let $A$ denote its adjacency matrix and $D=\text{diag}(d_1,d_2,...,d_n)$ its degree matrix. The \textit{Normailized Laplacian} is defined as $L=I-D^{-\frac{1}{2}}AD^{-\frac{1}{2}}$. The eigenvalues $\mu_i$'s of $L$ are bounded: $0 = \mu_1 \leq \mu_2 \leq ... \leq \mu_{n} \leq 2$. 

\vspace{2mm} \noindent \textbf{Random Walk Transition Matrix.} The \textit{transition matrix} of the random walk process on $G$ is $P=D^{-1}A$. As $P$ is \textit{similar} to $I-L$, which is $D^{-\frac{1}{2}}AD^{-\frac{1}{2}}$, the spectrum of $P$ is also bounded $1 = \lambda_1 \geq \lambda_2 \geq ... \geq \lambda_n \geq -1$, as $\lambda_i = 1-\mu_i$ ($1\leq i\leq n$). 

\vspace{2mm} \noindent \textbf{Spectral Moments.} 
Based on the random walk transition matrix, the spectral moments $m_l~(l=1,2,...)$ are defined as~\cite{inproceedingsJin}: \vspace{-2mm}
\begin{align}
	m_l = \mathbb{E}(\lambda^l) = \frac{1}{n}\sum_{i=1}^{n} {\lambda_i}^l.
	\label{eq2:spectral_moments}
\end{align}
Essentially, the $l$-th spectral moment is the expected return probability of an $l$-step random walk starting from a randomly chosen node. These moments are also closely related to various graph properties. For example, the second moment $m_2$ can be bounded by the mean and variance of the degree distribution~\cite{inproceedingsJin}, and it can be used to measure network robustness~\cite{jin2022spectral}. \vspace{-2mm}

\subsection{Spectral Moments of Higher-Order Graphs}
\label{sec4.2:spectral_moments_HON}
To extract spectral moments from higher-order graphs, we transform the higher-order graph into multiple weighted dyadic graphs so that we preserve structural properties at all orders.  

\vspace{2mm} \noindent \textbf{$S$-Walk.} Let $(V, E_r) \subseteq \mathcal{G}$ be an $r-$uniform subgraph of hypergraph $\mathcal{G}$.  For $1 \leq s \leq r-1$, an \textit{s-walk}~\cite{lu2011highordered} of length $k$ is a sequence of vertices:
$$v_1, v_2,..., v_j ,..., v_{(r-s)(k-1)+r}$$
along with a sequence of edges $e_1, e_2,..., e_k$ such that \vspace{-2mm}
$$e_i = \{v_{(r-s)(i-1)+1},v_{(r-s)(i-1)+2},...,v_{(r-s)(i-1)+r}\}.$$
Simply speaking, an $s$-walk of $r$-uniform hypergraph is a sequence of $r$-edges where each two consecutive edges in the walk share $s$ number of vertices. 

\begin{figure}[t]
    \subfloat[A $2$-walk in an $r=4$ hypergraph.\label{fig:r4s2_walk}]{%
			\resizebox{0.5\linewidth}{!}{\begin{tikzpicture}

\draw[ultra thin] (0,0) -- (7.8,0) -- (7.8,2) -- (0,2) -- (0,0);
\draw[ultra thin] (4,-0.2) -- (12,-0.2) -- (12,2.2) -- (4,2.2) -- (4,-0.2);
\draw[ultra thin] (8.2,0) -- (16,0) -- (16,2) -- (8.2,2) -- (8.2,0);

\node[font=\fontsize{52}{58}\sffamily\bfseries] at (1,0.8) {a};
\node[font=\fontsize{52}{58}\sffamily\bfseries] at (3,1) {b};
\node[font=\fontsize{52}{58}\sffamily\bfseries] at (5,0.8) {c};
\node[font=\fontsize{52}{58}\sffamily\bfseries] at (7,1) {d};
\node[font=\fontsize{52}{58}\sffamily\bfseries] at (9,0.8) {e};
\node[font=\fontsize{52}{58}\sffamily\bfseries] at (11,1) {f};
\node[font=\fontsize{52}{58}\sffamily\bfseries] at (13,0.9) {g};
\node[font=\fontsize{52}{58}\sffamily\bfseries] at (15,1) {h};
\node[font=\fontsize{52}{58}\sffamily\bfseries] at (19,1) {...    ...};

\end{tikzpicture}}%
		}\hfill
    \subfloat[A $2$-walk in an $r=3$ hypergraph.\label{fig:r3s2_walk}]{%
			\resizebox{0.43\linewidth}{!}{\begin{tikzpicture}

\draw[ultra thin] (0,0.2) -- (5.8,0.2) -- (5.8,1.8) -- (0,1.8) -- (0,0.2);
\draw[ultra thin] (2,0) -- (8,0) -- (8,2) -- (2,2) -- (2,0);
\draw[ultra thin] (4,-0.2) -- (10,-0.2) -- (10,2.2) -- (4,2.2) -- (4,-0.2);
\draw[ultra thin] (6.2,0.2) -- (12,0.2) -- (12,1.8) -- (6.2,1.8) -- (6.2,0.2);

\node[font=\fontsize{52}{58}\sffamily\bfseries] at (1,0.8) {a};
\node[font=\fontsize{52}{58}\sffamily\bfseries] at (3,1) {b};
\node[font=\fontsize{52}{58}\sffamily\bfseries] at (5,0.8) {c};
\node[font=\fontsize{52}{58}\sffamily\bfseries] at (7,1) {d};
\node[font=\fontsize{52}{58}\sffamily\bfseries] at (9,0.8) {e};
\node[font=\fontsize{52}{58}\sffamily\bfseries] at (11,1) {f};
\node[font=\fontsize{52}{58}\sffamily\bfseries] at (15,1) {...    ...};

\end{tikzpicture}}%
		}
	
	\caption{Two cases of $s$-walks when calculating Laplacians. (a) Case $1 \leq s \leq r/2$, edges are only adjacent to one previous and one following edge; (b) Case $r/2 < s \leq r-1$, there exists more overlaps among edges.
    \label{fig:swalks}}

\end{figure}
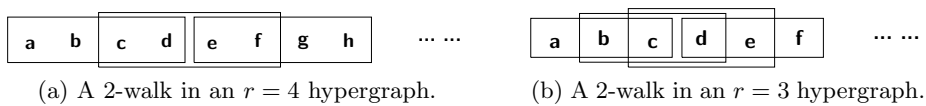

\vspace{2mm} \noindent \textbf{$S$-Laplacian.} Normalized Laplacian of $r-$uniform hypergraphs can be obtained by constructing the corresponding dyadic graphs. Basically, we convert the $s$-walks on higher-order edges to traversals on dyadic graphs with the same path lengths. To build a one-to-one mapping from higher-order to dyadic, the new nodes in dyadic graphs have to be ordered tuples of size $s$. More specifically, we define an undirected (weighted) graph $G^{(s)}$ based on $V^{\underline{s}}$, where $V^{\underline{s}}$ is the set of all ordered $s$-tuples of $V$. Note that all $s$-tuples have to be in order to maintain the order of the $s$-walks. 

In the following two cases, as shown in Figure~\ref{fig:swalks}, the converted graphs $G^{(s)}$ will be slightly different with respect to $s$. Figure~\ref{fig:swalks} (a) leads to dyadic path \texttt{(ab)-(cd)-(ef)-(gh)}; while Figure~\ref{fig:swalks} (b) leads to \texttt{(ab)-(bc)-(cd)-(de)-(ef)}. Here, we define the Laplacian of the generated graph $G^{(s)}$ as the \textit{$S$-Laplacian} of $r$-uniform hypergraph $(V, E_r)$.

\vspace{2mm} \noindent \textbf{Degrees in hypergraph and dyadic graph.} When $1 \leq s \leq r/2$, for $x \in V^{\underline{s}}$, the degree of $x$ in dyadic graph $G^{(s)}$ is given by $d_x = \sum_y w(x,y) = D_{[x]} {{r-s}\choose s} s!,$ where $D_{[x]}$ is the degree of the set $[x]$ in hypergraph $\mathcal{G}$, which is the number of hyperedges containing set $[x]$. $d_x$ is equal to $D_{[x]}$ multiplied by permutations of the remaining $(r-s)$ nodes.  \vspace{-2mm}

\subsection{Spectral Moments and Properties of Higher-Order Network}
\label{subsec:proofs}
\vspace{-2mm}Here, we aim to show that spectral moments can capture the structural information of higher-order networks. First, we extend past results on spectral moments in simple undirected graphs to weighted graphs, which facilitates studying the $s$-walk transformed graphs from the hypergraph. Next, we discuss how the second and third spectral moments provide bounds on the node-set degrees and clustering properties in the hypergraph. Here, we only prove cases under $1 \leq s \leq r/2$, as when $s > r/2$, discussing graph properties becomes challenging due to the repetition of some of the original nodes.\vspace{-3mm}

\subsubsection{Spectral moments in weighted graphs}
Here, we take the second moment $m_2$ as an example. In \cite{inproceedingsJin}, the authors prove that the second spectral moment $m_2$ of an undirected and unweighted graph is $m_2 = \E(d_i)\E(\frac{1}{d_id_j})$. In Theorem \ref{Thm:Weighted}, we show that the equation also applies to weighted graphs.\vspace{-2mm}

\begin{theorem}
\label{Thm:Weighted}
For a weighted graph, the second spectral moment $m_2$ is:\vspace{-2mm}
$$m_2 = \E(\lambda^2) = \E(d_i)\E(\frac{1}{d_id_j}),$$ 
where $\E(d_i)$ denotes the average degree in the weighted graph and $d_id_j$ follows the joint degree distribution $p(d_i,d_j)$: the probability that a node with degree $d_i$ is connected to another node with degree $d_j$.
\end{theorem}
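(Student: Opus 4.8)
The plan is to follow the template used for the unweighted case in~\cite{inproceedingsJin} and reduce the statement to a single trace identity. First I would observe that the weighted transition matrix $P=D^{-1}A$ (with $A$ the symmetric weighted adjacency matrix, $A_{ij}=A_{ji}=w_{ij}$, and $D=\mathrm{diag}(d_1,\dots,d_n)$, $d_i=\sum_j w_{ij}$) is similar to the symmetric matrix $D^{-1/2}AD^{-1/2}=I-L$; hence $P$ has real eigenvalues $\lambda_i=1-\mu_i$ and, for every $l$, $\sum_{i=1}^n\lambda_i^l=\mathrm{tr}(P^l)$. In particular $m_2=\E(\lambda^2)=\frac1n\sum_i\lambda_i^2=\frac1n\mathrm{tr}(P^2)$, which is exactly the ``expected $2$-step return probability'' reading and is the form from which everything else should fall out.

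Next I would expand the trace entrywise and use the symmetry of $A$: $\mathrm{tr}(P^2)=\sum_{i,j}P_{ij}P_{ji}=\sum_{i,j}\frac{A_{ij}}{d_i}\cdot\frac{A_{ji}}{d_j}=\sum_{i,j}\frac{w_{ij}^2}{d_id_j}$. It then remains to re-package this sum probabilistically. Writing $\mathrm{vol}(G)=\sum_i d_i$, I would factor $m_2=\frac{\mathrm{vol}(G)}{n}\cdot\frac{1}{\mathrm{vol}(G)}\sum_{i,j}\frac{w_{ij}^2}{d_id_j}$, identify $\frac{\mathrm{vol}(G)}{n}=\E(d_i)$ as the average (weighted) degree, and identify $\frac{1}{\mathrm{vol}(G)}\sum_{i,j}\frac{w_{ij}^2}{d_id_j}$ with $\E\!\left(\frac1{d_id_j}\right)$, where $(d_i,d_j)$ is drawn from the joint degree distribution $p(d_i,d_j)$ obtained by sampling an ordered incident pair $(i,j)$ with probability proportional to its weight $w_{ij}$ — i.e.\ treating the weights as edge multiplicities, this is just picking a uniformly random (directed) edge of the associated multigraph. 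When the graph is unweighted this reduces, via $w_{ij}^2=w_{ij}$, to the identity of~\cite{inproceedingsJin}; for the $S$-Laplacian graphs $G^{(s)}$ with $1\le s\le r/2$ the weights are positive integers (degrees of $2s$-node sets in the hypergraph), so the same reading applies.

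The hard part — essentially the only nontrivial point — is the bookkeeping of the weight factors in the last step: in $\mathrm{tr}(P^2)=\sum_{i,j}w_{ij}^2/(d_id_j)$ each edge contributes one factor $w_{ij}$ from being traversed ``out'' and another from being traversed ``back,'' and one must apportion these consistently between the $\E(d_i)$ factor and the weighted joint degree distribution so that $p(d_i,d_j)$ is a genuine probability distribution while $\E(d_i)$ remains the plain average degree; this is precisely the place where the trivial unweighted simplification no longer hides the argument and a careful definition of the weighted joint degree distribution is needed. Once that distribution is fixed in this way, the equality $m_2=\E(d_i)\,\E(1/(d_id_j))$ is immediate from the trace expansion, with no further computation. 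I would also remark that the same recipe, $m_l=\frac1n\mathrm{tr}(P^l)=\frac1n\sum_{i_0,\dots,i_{l-1}}\prod_{t}P_{i_ti_{t+1}}$ (indices cyclic), is what will later produce the weighted analogues for $m_3$ and the degree- and clustering-based bounds.
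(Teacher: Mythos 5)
Your opening reduction is correct and is a genuinely different route from the paper: you compute $m_2=\frac1n\mathrm{tr}(P^2)=\frac1n\sum_{i,j}w_{ij}^2/(d_id_j)$ via similarity of $P$ to $D^{-1/2}AD^{-1/2}$, whereas the paper never writes a trace; it splits each weight-$k$ edge into $k$ parallel unit edges, credits each unit edge with a contribution of $2/(d_id_j)$ to the total return probability (the two closed walks $i\to j\to i$ and $j\to i\to j$ that reuse that same unit edge), and multiplies by the edge count $\frac n2\E(d_i)p(d_i,d_j)$. The problem is the step you yourself flag as ``the hard part'': it is not a bookkeeping detail that can be deferred, it is a genuine gap, and it cannot be closed in the way you propose. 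If $p(d_i,d_j)$ samples an ordered incident pair with probability proportional to $w_{ij}$ (your multigraph reading, and the one the theorem intends), then $\E(d_i)\,\E\!\left(\tfrac1{d_id_j}\right)=\frac{\mathrm{vol}(G)}{n}\cdot\frac{1}{\mathrm{vol}(G)}\sum_{i,j}\frac{w_{ij}}{d_id_j}=\frac1n\sum_{i,j}\frac{w_{ij}}{d_id_j}$, which is \emph{linear} in the weights, while your trace is \emph{quadratic}. The two coincide only for $0/1$ weights; no apportioning of the leftover factor $w_{ij}$ between $\E(d_i)$ and $p$ can reconcile them while keeping $\E(d_i)$ the plain average degree and $p$ a probability distribution. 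A two-node graph joined by a single edge of weight $2$ already separates them: $\frac1n\mathrm{tr}(P^2)=1$ but $\E(d_i)\E(1/(d_id_j))=\tfrac12$.

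So, carried out honestly, your plan does not prove the stated identity for general weights --- it shows that $\frac1n\mathrm{tr}(P^2)$ and $\E(d_i)\E(1/(d_id_j))$ are different quantities once weights exceed $1$. The paper's per-unit-edge accounting reaches the product form precisely because it counts only the round trips that go out and return along the \emph{same} unit edge, omitting the closed $2$-walks that leave on one parallel unit edge and return on another; those omitted walks are exactly what produces the $w_{ij}^2$ in the true trace. To repair your write-up you must either restrict to $0/1$ weights (where $w_{ij}^2=w_{ij}$ and your last step becomes the clean identity you describe), or redefine the joint degree distribution to sample pairs proportionally to $w_{ij}^2$ (at the cost that the prefactor is no longer the average degree), or explicitly adopt the paper's same-unit-edge convention for what counts as a return --- in which case you should also say that the resulting $m_2$ is no longer $\frac1n\sum_i\lambda_i^2$ for the weighted transition matrix. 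As written, the proposal asserts an identification of $\frac{1}{\mathrm{vol}(G)}\sum_{i,j}w_{ij}^2/(d_id_j)$ with $\E(1/(d_id_j))$ that is simply false, and everything downstream (including the promised $m_3$ analogue) inherits the same unresolved factor of $w_{ij}$.
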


\begin{proof}
For any edge $(i,j)$ with weight $k$, we can consider it as $k$ unit edges with weight 1 between nodes $i$ and $j$. As mentioned, the $2^\text{nd}$ spectral moment can be viewed as the expected return probability of a 2-step random walk, which is equal to the summation of the return probability of a 2-step random walk starting from any node $i$ divided by the number of nodes. For an edge connecting nodes $i$ and $j$, it will increase the overall return probability by $\frac{2}{d_id_j}$ as it includes 2 closed walks: $i \rightarrow j \rightarrow i$ and $j \rightarrow i \rightarrow j$. There are $\frac{n}{2} \cdot \E(d_i) \cdot p(d_i,d_j)$ edges with node having degreee $d_i$ and $d_j$. Therefore, 
\begin{align}
\label{Equation1}
    m_2 = \frac{\sum_{d_i, d_j} \frac{2}{d_id_j} \cdot \frac{n}{2} \cdot \E(d_i) \cdot p(d_i,d_j)}{n} = \E(d_i)\E(\frac{1}{d_id_j}).
\end{align}
\end{proof}

\subsubsection{Second spectral moment $m_2$ and degrees.} In Section~\ref{sec4.2:spectral_moments_HON}, we converted the higher-order graph to multiple weighted dyadic graphs. However, the degrees in converted graphs are not exactly the degrees in the original hypergraph. Here, we demonstrate the relationship between $m_2$ and degrees of the $r$-uniform hypergraph.

\begin{theorem}
\label{Thm:HOm2}
    For an $r$-uniform hypergraph $\mathcal{G} = (V, E_r)$, when $1 \leq s \leq r/2$, the second spectral moment $m_2$ of the corresponding dyadic graph $G = (V_G, E_G)$ is equal to $\frac{{r \choose s}|E_r|}{2{r-s \choose s}|V_G|}  \cdot \E_{(i,j) \in G}(\frac{1}{D{[i]}D{[j]}})$, where $D_{[i]}$ and $D_{[j]}$ are the actual degree of node sets $[i]$ and $[j]$ in the hypergraph, respectively.
\end{theorem}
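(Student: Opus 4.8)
The plan is to reduce the claim to Theorem~\ref{Thm:Weighted} applied to the weighted dyadic graph $G=G^{(s)}$, and then translate the dyadic quantities back to the hypergraph using the degree conversion from Section~\ref{sec4.2:spectral_moments_HON}. Since $G^{(s)}$ is weighted, Theorem~\ref{Thm:Weighted} already gives $m_2 = \E(d_i)\,\E\!\big(\tfrac{1}{d_id_j}\big)$, with the first expectation over the vertices of $G$ and the second over its (weight-counted) edges; the whole argument then amounts to rewriting each of these two factors in terms of $|E_r|$, $|V_G|$, and the node-set degrees $D_{[\cdot]}$.

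For the degree factor I would use the hypothesis $1\le s\le r/2$, under which the identity $d_x = D_{[x]}\binom{r-s}{s}s!$ holds for every ordered $s$-tuple $x$ (exactly the relation recalled at the end of Section~\ref{sec4.2:spectral_moments_HON}); write $c:=\binom{r-s}{s}s!$. Then $\E(d_i)=\tfrac{c}{|V_G|}\sum_{x} D_{[x]}$, and $\sum_x D_{[x]}$ is evaluated by double counting: grouping ordered $s$-tuples first by their underlying set and then by the containing hyperedge shows each $e\in E_r$ contributes exactly $\binom{r}{s}s!=\tfrac{r!}{(r-s)!}$ ordered $s$-tuples, so $\sum_x D_{[x]}=\binom{r}{s}s!\,|E_r|$ and $\E(d_i)=\tfrac{c\binom{r}{s}s!\,|E_r|}{|V_G|}$. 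Equivalently $\E(d_i)=2|E_G|/|V_G|$ by the handshake identity. One should also check that restricting $V_G$ to the $s$-tuples that actually occur in some hyperedge does not disturb this count, since tuples with $D_{[x]}=0$ contribute nothing.

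For the edge factor, on any edge $(i,j)$ of $G^{(s)}$ we have $d_id_j=c^2 D_{[i]}D_{[j]}$, hence $\E\!\big(\tfrac{1}{d_id_j}\big)=\tfrac{1}{c^2}\,\E_{(i,j)\in G}\!\big(\tfrac{1}{D_{[i]}D_{[j]}}\big)$; passing from the joint degree distribution of the $d$'s to that of the $D_{[\cdot]}$'s is harmless because $d_x\mapsto D_{[x]}$ is a fixed rescaling and, exactly as in the proof of Theorem~\ref{Thm:Weighted}, a weight-$w$ edge is sampled $w$ times. Multiplying the two factors, the powers of $c$ collapse and one lands on a constant of the form $\tfrac{\binom{r}{s}|E_r|}{\binom{r-s}{s}|V_G|}$ times $\E_{(i,j)\in G}\!\big(\tfrac{1}{D_{[i]}D_{[j]}}\big)$. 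I expect the combinatorics of counting ordered $s$-tuples inside an $r$-set to be entirely routine; the one genuinely delicate point is the normalization bookkeeping — pinning down the handshake factor together with the edge-sampling convention of Theorem~\ref{Thm:Weighted} so that the leading constant in the statement, $\binom{r}{s}|E_r|/\big(2\binom{r-s}{s}|V_G|\big)$, is reproduced exactly.
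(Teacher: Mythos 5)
Your proof follows essentially the same route as the paper's: apply Theorem~\ref{Thm:Weighted} to the weighted graph $G^{(s)}$, count that each hyperedge contributes ${r\choose s}s!\cdot{r-s\choose s}s!/2$ dyadic edges (equivalently $\sum_x d_x = {r\choose s}s!\,{r-s\choose s}s!\,|E_r|$), and substitute $d_x = D_{[x]}{r-s\choose s}s!$ into both factors. All of your combinatorial counts agree with the paper's.

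The one point you flag as unresolved --- reproducing the factor $\tfrac{1}{2}$ in the stated constant --- is not something you should expect to resolve, because your constant $\frac{{r\choose s}|E_r|}{{r-s\choose s}|V_G|}$ is the internally consistent one. Your first factor is $\E(d_i)=\sum_x d_x/|V_G| = \frac{{r\choose s}s!{r-s\choose s}s!|E_r|}{|V_G|}$; the paper's displayed computation instead inserts $\frac{{r\choose s}s!{r-s\choose s}s!|E_r|}{2|V_G|}$, i.e.\ $|E_G|/|V_G| = \E(d_i)/2$, which contradicts the line immediately above it in the paper's own proof ($m_2 = \frac{\sum_i d_i}{|V_G|}\cdot\sum_{i\sim j}\frac{1}{d_id_j}p(d_i,d_j)$). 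A sanity check confirms your version: for $r=2$, $s=1$ the construction gives $G^{(1)}=\mathcal{G}$ and $D_{[i]}=d_i$, so the formula must reduce to $\E(d_i)\E(\frac{1}{d_id_j}) = \frac{2|E|}{|V|}\E(\frac{1}{d_id_j})$; the stated constant gives $\frac{|E|}{|V|}\E(\frac{1}{d_id_j})$, which already fails on a single edge ($m_2=1$, not $\tfrac12$) or a triangle ($m_2=\tfrac12$, not $\tfrac14$). So your derivation is correct as it stands; the discrepancy is a factor-of-two error in the theorem statement (which propagates into the bound in Equation~\ref{ineq:m2}), not a gap in your argument. The only genuine weakness in your write-up is that you present this as bookkeeping still to be pinned down rather than committing to the constant your own (correct) computation yields.
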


\begin{proof}
For a node $i$ in the dyadic graph $G$, we denote $d_i$ as its degree in dyadic graph $G$, and $D_{[i]} = |{e \in E_r}: i \subset e|$ as its degree in the hypergraph $\mathcal{G}$. In other words, $D_{[i]}$ is the number of actual hyperedges where node set $[i]$ is in.

From Equation \ref{Equation1}, we can know that $m_2 = \frac{\sum_{i \in V_G}d_i}{|V_G|} \cdot \sum_{i \sim j}\frac{1}{d_id_j} \cdot p(d_i,d_j)$. Notice that for each hyperedge in $\mathcal{G}$ when $1 \leq s \leq r/2$, it can generate ${r \choose s} s! \cdot {r-s \choose s} s!/ 2$ edges in the corresponding dyadic graph. Therefore, $\sum_{i \in V_G}d_i = {r \choose s} s! \cdot {r-s \choose s} s! \cdot |E_r|$. Moreover, for each node $i$, $d_i = D_{[i]} {r-s \choose s} s!$, so $\frac{1}{d_id_j} = \frac{1}{{r-s \choose s}^2 s!^2} \cdot \frac{1}{D_{[i]}D_{[j]}}$ and $p(d_i, d_j) = p(D_{[i]}, D_{[j]})$. Hence, 

\begin{align}
    m_2 &= \frac{{r \choose s}s!{r-s \choose s}s! |E_r|} {2|V_G|} \cdot \sum_{i \sim j} \frac{1}{{r-s \choose s}^2 s!^2} \cdot \frac{1}{D_{[i]}D_{[j]}} \cdot p(D_{[i]}, D_{[j]}) \\
    &= \frac{{r \choose s}|E_r|}{2{r-s \choose s}|V_G|}  \cdot \E_{(i,j) \in G}(\frac{1}{D_{[i]}D_{[j]}})
\end{align}
\end{proof}

Theorem \ref{Thm:HOm2}, provides a natural bound: 
\begin{equation}
\label{ineq:m2}
    m_2 \geq \frac{1}{2{r-s \choose s}} \cdot \E_{(i,j) \in G}(\frac{1}{D_{[i]}D_{[j]}}),
\end{equation}
which indicates that $m_2$ provides an upper bound on $\E_{(i,j) \in G}(\frac{1}{D_{[i]}D_{[j]}})$.

\subsubsection{Third spectral moment $m_3$ and clustering properties.}
Similar to Theorem~\ref{Thm:Weighted}, it is easy to prove in a weighted graph $m_3 = 2\E(\delta_i)\E(\frac{1}{d_hd_id_j})$~\cite{inproceedingsJin}, where $\E(\delta_i)$ is the average number of triads a node is in and $d_hd_id_j$
follows the joint degree distribution of triads $p(d_h, d_i
, d_j)$ (we skip the details for brevity). Since $G$ becomes a weighted graph, when node $h,i,j$ form a triangle $\delta$, the weight of $\delta$ is counted by $\min(w(h,i),w(h,j),w(j,i))$. 
Here, we justify how $m_3$ also represents the clustering and degree properties of the original hypergraph. 

\begin{theorem}
\label{Thm:HOm3}
    For an $r$-uniform hypergraph $\mathcal{G} = (V, E_r)$, when $1 \leq s \leq r/2$, the third spectral moment $m_3$ of the corresponding dyadic graph $G = (V_G, E_G)$ is equal to $2(\E(\Delta_{[i]})+3\cdot{r-s \choose 2}\cdot \frac{|E_r|}{|V|}) \cdot \frac{1}{{r-s \choose s}^3} \cdot \E_{(h,i,j) \in G}(\frac{1}{D_{[h]}D_{[i]}D_{[j]}})$, where $\E(\Delta_{[i]})$ is the average number of triads of node-set $[i]$ belong to in the hypergraph, $\E_{(h,i,j) \in G}(\frac{1}{D_{[h]}D_{[i]}D_{[j]}})$ is the joint degree distribution of node set $[h], [i] ,[j]$, $|V|$ and $|E_r|$ are the number of nodes and edges of the original hypergraph. 
\end{theorem}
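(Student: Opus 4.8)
The plan is to run the same reduction that proves Theorem~\ref{Thm:HOm2}, but starting from the third-moment identity for weighted graphs instead of the second. Concretely, I would begin from $m_3 = 2\,\E(\delta_i)\,\E\!\big(\tfrac{1}{d_hd_id_j}\big)$ (the weighted analogue of Theorem~\ref{Thm:Weighted} stated just above), where $\delta_i$ is the min-weighted number of triads through vertex $i$ of the dyadic graph $G$ and the expectation is over the joint degree distribution of triads of $G$. The goal is to rewrite both factors purely in terms of hypergraph quantities and then simplify.

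First I would dispatch the easy factor $\E\!\big(\tfrac1{d_hd_id_j}\big)$. Section~\ref{sec4.2:spectral_moments_HON} gives that every vertex $x$ of $G$ satisfies $d_x = D_{[x]}\binom{r-s}{s}s!$, a fixed rescaling of the hypergraph node-set degree; hence the triad joint distribution of $G$ coincides with that of the node-set degrees $D_{[\cdot]}$, and $\E\!\big(\tfrac1{d_hd_id_j}\big) = \tfrac{1}{\binom{r-s}{s}^3 (s!)^3}\,\E_{(h,i,j)\in G}\!\big(\tfrac1{D_{[h]}D_{[i]}D_{[j]}}\big)$. This is the routine part.

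The substantive step is to express $\E(\delta_i)$, the per-vertex triad count of $G$, in terms of the hypergraph. Because $1 \le s \le r/2$, any two adjacent vertices of $G$ have disjoint underlying $s$-sets, so every triad $\{x,y,z\}$ of $G$ has pairwise-disjoint $s$-sets $[x],[y],[z]$ (total size $3s$). I would classify such triads into (i) those whose three $s$-sets all lie inside one hyperedge of $\mathcal G$: for each ordering of the three $s$-sets these are exactly the hypergraph triads counted by $\Delta_{[i]}$, contributing $(s!)^3\,\E(\Delta_{[i]})$ once the $(s!)^3$ orderings of the tuples are accounted for; and (ii) the remaining triads forced by the construction inside a single hyperedge (together with the bookkeeping induced by the min-weight convention), which a direct enumeration over (hyperedge, vertex-of-$G$) incidences evaluates to $3\binom{r-s}{2}\tfrac{|E_r|}{|V|}$ per vertex. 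Summing (i) and (ii) gives $\E(\delta_i) = (s!)^3\big(\E(\Delta_{[i]}) + 3\binom{r-s}{2}\tfrac{|E_r|}{|V|}\big)$, and substituting both factors back into $m_3 = 2\E(\delta_i)\E\!\big(\tfrac1{d_hd_id_j}\big)$ cancels the $(s!)^3$ and yields the claimed formula.

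The main obstacle is step (ii): pinning down the constant $3\binom{r-s}{2}\tfrac{|E_r|}{|V|}$ cleanly. A triple of $s$-sets may be realized simultaneously by several hyperedges (and in part across different hyperedges), so the "genuine-triad'' count of (i) and the correction of (ii) must be set up so that no configuration is counted twice, and the min-weight convention for triangle weights in $G$ has to be tracked consistently through the $\delta_i \leftrightarrow \Delta_{[i]}$ translation. By comparison, the degree substitution and the final simplification of the binomial/factorial constants are routine.
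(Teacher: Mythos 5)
Your proposal follows essentially the same route as the paper's proof: start from the weighted-graph identity $m_3 = 2\E(\delta_i)\E\bigl(\tfrac{1}{d_hd_id_j}\bigr)$, substitute $d_x = D_{[x]}\binom{r-s}{s}s!$ to convert the degree factor, and establish $\E(\delta_i)=(s!)^3\bigl(\E(\Delta_{[i]})+3\binom{r-s}{2}\tfrac{|E_r|}{|V|}\bigr)$ by splitting triads of $G$ into those coming from genuine three-hyperedge triangles and those formed inside a single hyperedge, after which the $(s!)^3$ factors cancel. Your decomposition into cases (i) and (ii) is in fact more explicit than the paper's one-line assertion of the $|\delta_i|$ formula, and the bookkeeping concern you raise about the $3\binom{r-s}{2}\tfrac{|E_r|}{|V|}$ term is a real one that the paper also leaves unargued.
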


\begin{proof}
For a node set $i$ in the hypergraph $\mathcal{G}$, we denote $\Delta_{[i]}$ as the number of triads it belongs to. We denote two triads as different when at least one of their hyperedges is different. First, we calculate how many triads are formed in the transformed graph $G$. If $e_1,e_2,e_3$ are three hyperedges that form a triangle restricted by walk of size $s$, and $e_1\cap e_2=i$, $e_2\cap e_3=j$ and $e_3\cap e_1=h$. The number of triads in the dyadic graph is multiplied by only the permutations of $i$, $j$ and $h$, that is 
\begin{equation}
    |\delta_i| =s!^3(|\Delta_{[i]}|+3\cdot{r-s \choose 2}\cdot \frac{|E_r|}{|V|})
\end{equation}

Then, $m_3$ can be calculated by 
\begin{align}
    m_3 &= 2\E(\delta_i)\E(\frac{1}{d_hd_id_j}) \\
     &= 2s!^3(|\Delta_{[i]}|+3\cdot{r-s \choose 2}\cdot \frac{|E_r|}{|V|}) \cdot \frac{1}{{r-s \choose s}^3 s!^3} \cdot \E_{(h,i,j) \in G}(\frac{1}{D_{[h]}D_{[i]}D_{[j]}})\\
     &= 2(|\Delta_{[i]}|+3\cdot{r-s \choose 2}\cdot \frac{|E_r|}{|V|}) \cdot \frac{1}{{r-s \choose s}^3} \cdot \E_{(h,i,j) \in G}(\frac{1}{D_{[h]}D_{[i]}D_{[j]}})
\end{align}
\end{proof}

Theorem \ref{Thm:HOm3} provides two inequalities (they sum up to  $m_3$):
\begin{align}
 m_3 & \geq \frac{2|\Delta_{[i]}|}{{r-s \choose s}^3}\cdot \E_{(h,i,j) \in G}(\frac{1}{D_{[h]}D_{[i]}D_{[j]}})\label{eq:4.3.1}\\
    m_3 & \geq 6\cdot{r-s \choose 2}\cdot \frac{|E_r|}{|V|} \cdot \frac{1}{{r-s \choose s}^3} \cdot \E_{(h,i,j) \in G}(\frac{1}{D_{[h]}D_{[i]}D_{[j]}})\label{eq:4.3.2}
\end{align}
Inequality \ref{eq:4.3.1} provides bound on number of triangles and degree distribution, while inequality \ref{eq:4.3.2} provides bound on number of nodes, number of edges and degrees distribution. Generally, for well-connected real-world graphs, inequality \ref{eq:4.3.1} will be a closer bound, as validated in section \ref{cha:quality_bound}. 
\begin{table*}[t]
\centering
\begin{minipage}{1\linewidth} 
\centering
\begin{adjustbox}{width=0.9\textwidth}
\begin{tabular}{@{}|l|r|r|r|r|r|r|@{}}
\hline
Graphs & Vertices & Timestamps & Unique Edges & Max Order & Average Order & Class\\
\hline
coauth-DBLP~\cite{DBLP:journals/corr/abs-1802-06916} & 1,924,991 & 3,700,067 & 2,599,087 & 25 & 2.78 & coauth\\
coauth-MAG-Geology~\cite{DBLP:journals/corr/abs-1802-06916} & 1,256,385 & 1,590,335 & 1,207,390 & 25 & 2.78 & coauth\\
coauth-MAG-History~\cite{DBLP:journals/corr/abs-1802-06916} & 1,014,734 & 1,812,511 & 895,668 & 25 & 1.31 & coauth\\
congress-bills~\cite{DBLP:journals/corr/abs-1802-06916} & 1,718 & 260,851 & 85,082 & 25 & 3.66 & congress\\
contact-high-school~\cite{DBLP:journals/corr/abs-1802-06916} & 327 & 172,035 & 7,937 & 5 & 2.05 & contact\\
contact-primary-school~\cite{DBLP:journals/corr/abs-1802-06916} & 242 & 106,879 & 12,799 & 5 & 2.10 & contact\\
DAWN~\cite{DBLP:journals/corr/abs-1802-06916} & 2,558 & 2,272,433 & 143,523 & 16 & 1.58 & dawn\\
email-Enron~\cite{DBLP:journals/corr/abs-1802-06916} & 143 & 10,883 & 1,542 & 18 & 2.47 & email\\
email-Eu~\cite{DBLP:journals/corr/abs-1802-06916} & 998 & 234,760 & 25,791 & 25 & 2.33 & email\\
tags-ask-ubuntu~\cite{DBLP:journals/corr/abs-1802-06916} & 3,029 & 271,233 & 151,441 & 5 & 2.71 & tags\\
tags-math-sx~\cite{DBLP:journals/corr/abs-1802-06916} & 1,629 & 822,059 & 174,933 & 5 & 2.19 & tags\\
tags-stack-overflow~\cite{DBLP:journals/corr/abs-1802-06916} & 49,998 & 14,458,875 & 5,675,497 & 5 & 2.97& tags\\
threads-ask-ubuntu~\cite{DBLP:journals/corr/abs-1802-06916} & 125,602 & 192,947 & 167,001 & 14 & 1.80& threads\\
twitter-hashtag-covid19~\footnote{\small \url{www.trackmyhashtag.com/data/COVID-19.zip}} & 12,033 & 59,892 & 10,074 & 33 & 2.21& twitter\\
twitter-hashtag-ira~\footnote{\small \url{https://archive.org/details/twitter-ira}} & 190,481 & 2,585,982 & 242,988 & 30 & 1.44& twitter\\
\hline
\end{tabular}
\end{adjustbox}
\end{minipage}
\caption{Data Statistics\vspace{-8mm}}
\label{tab1:summary_of_dataset}
\end{table*}

\vspace{-3mm}\section{Experiments}
\label{sec:experiments}
\vspace{-3mm}We validate the usefulness of spectral moments through extensive experiments, verifying the bounds on the second and third spectral moments ($m_2$ and $m_3$) and assessing their utility for graph classification. For classification, we simply feed the extracted features to the XGBoost~\cite{Chen:2016:XST:2939672.2939785} classifier with the default parameters. We further assess the influence of the number of moments and subgraph sizes. 

\vspace{-4mm}\subsection{Experimental Setup}
\label{sec:secExperimental}
\vspace{-2mm}We detail our experimental setup in this section. Our preprocessing includes (1) sampling many subgraphs from the higher-order graph; (2) computing spectral moments for each subgraph; and (3) \textit{downgrading} the higher-order graph to a dyadic graph for baseline methods.

\vspace{2mm} \noindent \textbf{Dataset and Subgraph Sampling.}
We use 15 publicly available higher-order graphs under 8 labels, as shown in Table~\ref{tab1:summary_of_dataset}. We only consider unique edges (as opposed to, for example, using weights/timestamps). Orders indicate the number of vertices that an edge can have (i.e., edge size). The class is the category of the dataset, where a graph with the same class is expected to have similar structures.  

\begin{figure*}[t]

    \subfloat[2nd moment $m_2$ bounds degrees.\label{fig:m2_bound}]{%
			\resizebox{0.45\linewidth}{4cm}{
\begin{tikzpicture}

\definecolor{darkgray176}{RGB}{176,176,176}
\definecolor{lightgray204}{RGB}{204,204,204}

\begin{axis}[
height=8cm,
legend cell align={left},
legend style={
  fill opacity=0.8,
  draw opacity=1,
  text opacity=1,
  at={(0.03,0.97)},
  anchor=north west,
  draw=lightgray204
},
tick align=outside,
tick pos=left,
width=10cm,
x grid style={darkgray176},
xlabel={Graphs},
xmin=-1.89, xmax=30.89,
xtick style={color=black},
y grid style={darkgray176},
ymin=0, ymax=0.525,
ytick style={color=black}
]
\draw[draw=none,fill=blue] (axis cs:-0.4,0) rectangle (axis cs:0.4,0.414285714285714);
\addlegendimage{ybar,ybar legend,draw=none,fill=blue}
\addlegendentry{$m_2$}

\draw[draw=none,fill=blue] (axis cs:0.6,0) rectangle (axis cs:1.4,0.428743961352657);
\draw[draw=none,fill=blue] (axis cs:1.6,0) rectangle (axis cs:2.4,0.472972972972973);
\draw[draw=none,fill=blue] (axis cs:2.6,0) rectangle (axis cs:3.4,0.428);
\draw[draw=none,fill=blue] (axis cs:3.6,0) rectangle (axis cs:4.4,0.395408163265306);
\draw[draw=none,fill=blue] (axis cs:4.6,0) rectangle (axis cs:5.4,0.486842105263158);
\draw[draw=none,fill=blue] (axis cs:5.6,0) rectangle (axis cs:6.4,0.41036036036036);
\draw[draw=none,fill=blue] (axis cs:6.6,0) rectangle (axis cs:7.4,0.443356643356643);
\draw[draw=none,fill=blue] (axis cs:7.6,0) rectangle (axis cs:8.4,0.446666666666667);
\draw[draw=none,fill=blue] (axis cs:8.6,0) rectangle (axis cs:9.4,0.481481481481481);
\draw[draw=none,fill=blue] (axis cs:9.6,0) rectangle (axis cs:10.4,0.395438596491228);
\draw[draw=none,fill=blue] (axis cs:10.6,0) rectangle (axis cs:11.4,0.425170068027211);
\draw[draw=none,fill=blue] (axis cs:11.6,0) rectangle (axis cs:12.4,0.5);
\draw[draw=none,fill=blue] (axis cs:12.6,0) rectangle (axis cs:13.4,0.482758620689655);
\draw[draw=none,fill=blue] (axis cs:13.6,0) rectangle (axis cs:14.4,0.5);
\draw[draw=none,fill=blue] (axis cs:14.6,0) rectangle (axis cs:15.4,0.43482905982906);
\draw[draw=none,fill=blue] (axis cs:15.6,0) rectangle (axis cs:16.4,0.398700305810398);
\draw[draw=none,fill=blue] (axis cs:16.6,0) rectangle (axis cs:17.4,0.423958333333333);
\draw[draw=none,fill=blue] (axis cs:17.6,0) rectangle (axis cs:18.4,0.460227272727273);
\draw[draw=none,fill=blue] (axis cs:18.6,0) rectangle (axis cs:19.4,0.415064102564103);
\draw[draw=none,fill=blue] (axis cs:19.6,0) rectangle (axis cs:20.4,0.415849673202614);
\draw[draw=none,fill=blue] (axis cs:20.6,0) rectangle (axis cs:21.4,0.474226804123711);
\draw[draw=none,fill=blue] (axis cs:21.6,0) rectangle (axis cs:22.4,0.46523178807947);
\draw[draw=none,fill=blue] (axis cs:22.6,0) rectangle (axis cs:23.4,0.428579418344519);
\draw[draw=none,fill=blue] (axis cs:23.6,0) rectangle (axis cs:24.4,0.372474747474748);
\draw[draw=none,fill=blue] (axis cs:24.6,0) rectangle (axis cs:25.4,0.451149425287356);
\draw[draw=none,fill=blue] (axis cs:25.6,0) rectangle (axis cs:26.4,0.495049504950495);
\draw[draw=none,fill=blue] (axis cs:26.6,0) rectangle (axis cs:27.4,0.458333333333333);
\draw[draw=none,fill=blue] (axis cs:27.6,0) rectangle (axis cs:28.4,0.45414201183432);
\draw[draw=none,fill=blue] (axis cs:28.6,0) rectangle (axis cs:29.4,0.46656050955414);
\draw[draw=none,fill=red,fill opacity=0.5] (axis cs:-0.2,0) rectangle (axis cs:0.2,0.298432208514755);
\addlegendimage{ybar,ybar legend,draw=none,fill=red,fill opacity=0.5}
\addlegendentry{bounded value}

\draw[draw=none,fill=red,fill opacity=0.5] (axis cs:0.8,0) rectangle (axis cs:1.2,0.340207389132553);
\draw[draw=none,fill=red,fill opacity=0.5] (axis cs:1.8,0) rectangle (axis cs:2.2,0.40405836753306);
\draw[draw=none,fill=red,fill opacity=0.5] (axis cs:2.8,0) rectangle (axis cs:3.2,0.348152356902357);
\draw[draw=none,fill=red,fill opacity=0.5] (axis cs:3.8,0) rectangle (axis cs:4.2,0.284583836553945);
\draw[draw=none,fill=red,fill opacity=0.5] (axis cs:4.8,0) rectangle (axis cs:5.2,0.443397745571658);
\draw[draw=none,fill=red,fill opacity=0.5] (axis cs:5.8,0) rectangle (axis cs:6.2,0.335354069752663);
\draw[draw=none,fill=red,fill opacity=0.5] (axis cs:6.8,0) rectangle (axis cs:7.2,0.371060533277638);
\draw[draw=none,fill=red,fill opacity=0.5] (axis cs:7.8,0) rectangle (axis cs:8.2,0.374466777075178);
\draw[draw=none,fill=red,fill opacity=0.5] (axis cs:8.8,0) rectangle (axis cs:9.2,0.432599487785658);
\draw[draw=none,fill=red,fill opacity=0.5] (axis cs:9.8,0) rectangle (axis cs:10.2,0.287037037037036);
\draw[draw=none,fill=red,fill opacity=0.5] (axis cs:10.8,0) rectangle (axis cs:11.2,0.306702256141433);
\draw[draw=none,fill=red,fill opacity=0.5] (axis cs:11.8,0) rectangle (axis cs:12.2,0.5);
\draw[draw=none,fill=red,fill opacity=0.5] (axis cs:12.8,0) rectangle (axis cs:13.2,0.402050953483245);
\draw[draw=none,fill=red,fill opacity=0.5] (axis cs:13.8,0) rectangle (axis cs:14.2,0.5);
\draw[draw=none,fill=red,fill opacity=0.5] (axis cs:14.8,0) rectangle (axis cs:15.2,0.36479303781133);
\draw[draw=none,fill=red,fill opacity=0.5] (axis cs:15.8,0) rectangle (axis cs:16.2,0.343502821583179);
\draw[draw=none,fill=red,fill opacity=0.5] (axis cs:16.8,0) rectangle (axis cs:17.2,0.290531015037593);
\draw[draw=none,fill=red,fill opacity=0.5] (axis cs:17.8,0) rectangle (axis cs:18.2,0.396864935587761);
\draw[draw=none,fill=red,fill opacity=0.5] (axis cs:18.8,0) rectangle (axis cs:19.2,0.31409918316564);
\draw[draw=none,fill=red,fill opacity=0.5] (axis cs:19.8,0) rectangle (axis cs:20.2,0.308137905223095);
\draw[draw=none,fill=red,fill opacity=0.5] (axis cs:20.8,0) rectangle (axis cs:21.2,0.400098998369439);
\draw[draw=none,fill=red,fill opacity=0.5] (axis cs:21.8,0) rectangle (axis cs:22.2,0.375053057759493);
\draw[draw=none,fill=red,fill opacity=0.5] (axis cs:22.8,0) rectangle (axis cs:23.2,0.354274660524661);
\draw[draw=none,fill=red,fill opacity=0.5] (axis cs:23.8,0) rectangle (axis cs:24.2,0.25024641577061);
\draw[draw=none,fill=red,fill opacity=0.5] (axis cs:24.8,0) rectangle (axis cs:25.2,0.337895622895623);
\draw[draw=none,fill=red,fill opacity=0.5] (axis cs:25.8,0) rectangle (axis cs:26.2,0.449225865209472);
\draw[draw=none,fill=red,fill opacity=0.5] (axis cs:26.8,0) rectangle (axis cs:27.2,0.369995501574449);
\draw[draw=none,fill=red,fill opacity=0.5] (axis cs:27.8,0) rectangle (axis cs:28.2,0.345546348272891);
\draw[draw=none,fill=red,fill opacity=0.5] (axis cs:28.8,0) rectangle (axis cs:29.2,0.376247438296355);

\end{axis}
\end{tikzpicture}}%
		}\hfill
    \subfloat[3rd moment $m_3$ bounds triangles.\label{fig:m3_bound}]{%
			\resizebox{0.457\linewidth}{4cm}{
\begin{tikzpicture}

\definecolor{darkgray176}{RGB}{176,176,176}
\definecolor{lightgray204}{RGB}{204,204,204}

\begin{axis}[
height=8cm,
legend cell align={left},
legend style={
  fill opacity=0.8,
  draw opacity=1,
  text opacity=1,
  at={(0.03,0.97)},
  anchor=north west,
  draw=lightgray204
},
tick align=outside,
tick pos=left,
width=10cm,
x grid style={darkgray176},
xlabel={Graphs},
xmin=-1.89, xmax=30.89,
xtick style={color=black},
y grid style={darkgray176},
ymin=0, ymax=0.2291796875,
ytick style={color=black},
yticklabel style={/pgf/number format/fixed}
]
\draw[draw=none,fill=blue] (axis cs:-0.4,0) rectangle (axis cs:0.4,0.164634146341463);
\addlegendimage{ybar,ybar legend,draw=none,fill=blue}
\addlegendentry{$m_3$}

\draw[draw=none,fill=blue] (axis cs:0.6,0) rectangle (axis cs:1.4,0.140444862155388);
\draw[draw=none,fill=blue] (axis cs:1.6,0) rectangle (axis cs:2.4,0.108095844918134);
\draw[draw=none,fill=blue] (axis cs:2.6,0) rectangle (axis cs:3.4,0.130208333333334);
\draw[draw=none,fill=blue] (axis cs:3.6,0) rectangle (axis cs:4.4,0.115759408602151);
\draw[draw=none,fill=blue] (axis cs:4.6,0) rectangle (axis cs:5.4,0.143300653594771);
\draw[draw=none,fill=blue] (axis cs:5.6,0) rectangle (axis cs:6.4,0.127697897340755);
\draw[draw=none,fill=blue] (axis cs:6.6,0) rectangle (axis cs:7.4,0.117066798941799);
\draw[draw=none,fill=blue] (axis cs:7.6,0) rectangle (axis cs:8.4,0.155536529680365);
\draw[draw=none,fill=blue] (axis cs:8.6,0) rectangle (axis cs:9.4,0.120213293650794);
\draw[draw=none,fill=blue] (axis cs:9.6,0) rectangle (axis cs:10.4,0.114088050314465);
\draw[draw=none,fill=blue] (axis cs:10.6,0) rectangle (axis cs:11.4,0.125395906819518);
\draw[draw=none,fill=blue] (axis cs:11.6,0) rectangle (axis cs:12.4,0.170362903225807);
\draw[draw=none,fill=blue] (axis cs:12.6,0) rectangle (axis cs:13.4,0.135089285714286);
\draw[draw=none,fill=blue] (axis cs:13.6,0) rectangle (axis cs:14.4,0.104383680555556);
\draw[draw=none,fill=blue] (axis cs:14.6,0) rectangle (axis cs:15.4,0.103009259259259);
\draw[draw=none,fill=blue] (axis cs:15.6,0) rectangle (axis cs:16.4,0.0971248670919727);
\draw[draw=none,fill=blue] (axis cs:16.6,0) rectangle (axis cs:17.4,0.157843137254902);
\draw[draw=none,fill=blue] (axis cs:17.6,0) rectangle (axis cs:18.4,0.125801282051282);
\draw[draw=none,fill=blue] (axis cs:18.6,0) rectangle (axis cs:19.4,0.154389880952381);
\draw[draw=none,fill=blue] (axis cs:19.6,0) rectangle (axis cs:20.4,0.119763513513514);
\draw[draw=none,fill=blue] (axis cs:20.6,0) rectangle (axis cs:21.4,0.127302918118467);
\draw[draw=none,fill=blue] (axis cs:21.6,0) rectangle (axis cs:22.4,0.139038231780167);
\draw[draw=none,fill=blue] (axis cs:22.6,0) rectangle (axis cs:23.4,0.131098484848485);
\draw[draw=none,fill=blue] (axis cs:23.6,0) rectangle (axis cs:24.4,0.109975961538461);
\draw[draw=none,fill=blue] (axis cs:24.6,0) rectangle (axis cs:25.4,0.143229166666666);
\draw[draw=none,fill=blue] (axis cs:25.6,0) rectangle (axis cs:26.4,0.145559210526316);
\draw[draw=none,fill=blue] (axis cs:26.6,0) rectangle (axis cs:27.4,0.19921875);
\draw[draw=none,fill=blue] (axis cs:27.6,0) rectangle (axis cs:28.4,0.133569587628866);
\draw[draw=none,fill=blue] (axis cs:28.6,0) rectangle (axis cs:29.4,0.155222140402553);
\draw[draw=none,fill=red,fill opacity=0.5] (axis cs:-0.2,0) rectangle (axis cs:0.2,0.107306848602023);
\addlegendimage{ybar,ybar legend,draw=none,fill=red,fill opacity=0.5}
\addlegendentry{bounded value}

\draw[draw=none,fill=red,fill opacity=0.5] (axis cs:0.8,0) rectangle (axis cs:1.2,0.118269141076272);
\draw[draw=none,fill=red,fill opacity=0.5] (axis cs:1.8,0) rectangle (axis cs:2.2,0.0878387637930371);
\draw[draw=none,fill=red,fill opacity=0.5] (axis cs:2.8,0) rectangle (axis cs:3.2,0.104373852332955);
\draw[draw=none,fill=red,fill opacity=0.5] (axis cs:3.8,0) rectangle (axis cs:4.2,0.0666959081376759);
\draw[draw=none,fill=red,fill opacity=0.5] (axis cs:4.8,0) rectangle (axis cs:5.2,0.105645095946875);
\draw[draw=none,fill=red,fill opacity=0.5] (axis cs:5.8,0) rectangle (axis cs:6.2,0.0989697037639312);
\draw[draw=none,fill=red,fill opacity=0.5] (axis cs:6.8,0) rectangle (axis cs:7.2,0.0950675168864363);
\draw[draw=none,fill=red,fill opacity=0.5] (axis cs:7.8,0) rectangle (axis cs:8.2,0.126898660120881);
\draw[draw=none,fill=red,fill opacity=0.5] (axis cs:8.8,0) rectangle (axis cs:9.2,0.0893484167498684);
\draw[draw=none,fill=red,fill opacity=0.5] (axis cs:9.8,0) rectangle (axis cs:10.2,0.0862451018873657);
\draw[draw=none,fill=red,fill opacity=0.5] (axis cs:10.8,0) rectangle (axis cs:11.2,0.0997659356771847);
\draw[draw=none,fill=red,fill opacity=0.5] (axis cs:11.8,0) rectangle (axis cs:12.2,0.0918314255983349);
\draw[draw=none,fill=red,fill opacity=0.5] (axis cs:12.8,0) rectangle (axis cs:13.2,0.101847143295725);
\draw[draw=none,fill=red,fill opacity=0.5] (axis cs:13.8,0) rectangle (axis cs:14.2,0.0638672015108799);
\draw[draw=none,fill=red,fill opacity=0.5] (axis cs:14.8,0) rectangle (axis cs:15.2,0.041944697249481);
\draw[draw=none,fill=red,fill opacity=0.5] (axis cs:15.8,0) rectangle (axis cs:16.2,0.0729940825530712);
\draw[draw=none,fill=red,fill opacity=0.5] (axis cs:16.8,0) rectangle (axis cs:17.2,0.10629272297154);
\draw[draw=none,fill=red,fill opacity=0.5] (axis cs:17.8,0) rectangle (axis cs:18.2,0.0517616648175539);
\draw[draw=none,fill=red,fill opacity=0.5] (axis cs:18.8,0) rectangle (axis cs:19.2,0.125479077732535);
\draw[draw=none,fill=red,fill opacity=0.5] (axis cs:19.8,0) rectangle (axis cs:20.2,0.0935417550904313);
\draw[draw=none,fill=red,fill opacity=0.5] (axis cs:20.8,0) rectangle (axis cs:21.2,0.103345717275927);
\draw[draw=none,fill=red,fill opacity=0.5] (axis cs:21.8,0) rectangle (axis cs:22.2,0.114393705632734);
\draw[draw=none,fill=red,fill opacity=0.5] (axis cs:22.8,0) rectangle (axis cs:23.2,0.0959079177567959);
\draw[draw=none,fill=red,fill opacity=0.5] (axis cs:23.8,0) rectangle (axis cs:24.2,0.0438181860207099);
\draw[draw=none,fill=red,fill opacity=0.5] (axis cs:24.8,0) rectangle (axis cs:25.2,0.0885499694247641);
\draw[draw=none,fill=red,fill opacity=0.5] (axis cs:25.8,0) rectangle (axis cs:26.2,0.11627059137999);
\draw[draw=none,fill=red,fill opacity=0.5] (axis cs:26.8,0) rectangle (axis cs:27.2,0.00732421874999994);
\draw[draw=none,fill=red,fill opacity=0.5] (axis cs:27.8,0) rectangle (axis cs:28.2,0.11622041543177);
\draw[draw=none,fill=red,fill opacity=0.5] (axis cs:28.8,0) rectangle (axis cs:29.2,0.131643833532202);
\end{axis}

\end{tikzpicture}}%
		}
	
	\caption{$m_2$ and $m_3$ via certain bouned values calculated from hypergraph properties, such as degrees and number of triangles. Figure \ref{fig:m2_bound} shows inequality of Equation \ref{ineq:m2}, Figure \ref{fig:m3_bound} shows inequality of Equation \ref{eq:4.3.1}. For clearness only 30 out of 500 small graphs from coauth-DBLP are shown. }
    \label{fig:moments_bound}
\end{figure*}
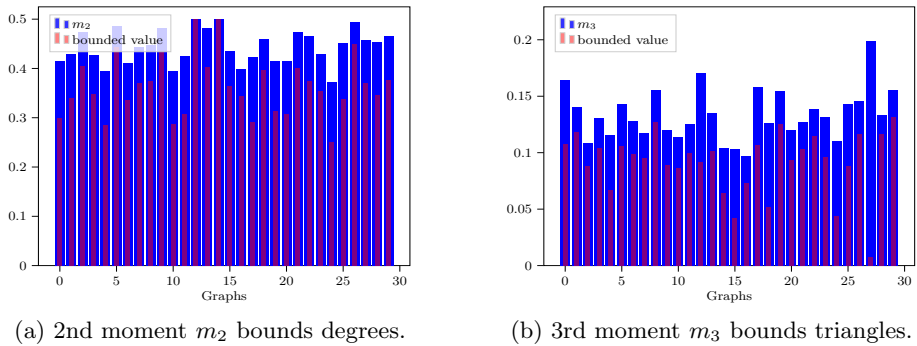

\vspace{-3mm}To perform classification on many (but relatively smaller) higher-order graphs, we perform random walk sampling~\cite{NIPS2006_dff8e9c2} on each graph. Starting from a random node, we select an edge uniformly at random and then move to another node on the \underline{same edge}. After the required sample size is reached, all edges that are composed solely of the sampled nodes will be selected. The advantage of random walk sampling is that it keeps a relatively complete structure locally and maintains high levels of connectivity. In our main experiment, the size of the randomly sampled subgraphs is in the range of 50-200. 

\vspace{1mm}\noindent \textbf{Computing Spectral Moments.}
As discussed in Section~\ref{sec4.2:spectral_moments_HON}, we calculate the moments by building dyadic graphs from the edge set. Specifically, for each higher-order edge, we generate all permutations of the given size $s$, as the new nodes of the dyadic graph. We form new edges if any pair of such nodes meet the requirements of being strictly involved in an $s$-walk. As the minimum of maximum order among all datasets is 5, we switch the order $r$ from 2 to 5 and $1\leq s\leq r-1$ correspondingly.

\vspace{1mm}\noindent \textbf{Downgrading to Dyadic Graphs.}
There is no absolute whole-graph representation learning method for higher-order networks, and extending classic dyadic methods to higher-order ones is challenging. To justify the effectiveness of the proposed method, we \textit{downgrade} higher-order graphs to dyadic graphs and apply multiple state-of-the-art classification methods on these dyadic variants. Specifically, to downgrade, for each higher-order edge, we enumerate all of its dyadic sub-edges and add them to the downgraded graph of the same vertex space. For example, an order-3 edge $(v_1,v_2,v_3)$ will yield three downgraded edges: $(v_1,v_2)$, $(v_2,v_3)$, and $(v_1,v_3)$~\cite{10.1145/3394486.3403060}. While downgrading to a dyadic graph loses information on higher-order connections, it preserves structure in the lower-order space. 

\vspace{-3mm}\subsection{Quality of Bounds in Real-World Graphs.} \label{cha:quality_bound} \vspace{-2mm}Although second $m_2$ and third $m_3$ spectral moments provide bounds by certain graph properties, the quality of these bounds requires further verification through experiments on real-world graphs. Here, we further measure how the actual values are close to these bounds. 

For each individual higher-order graph, we enumerate the actual number of triangles and calculate degrees to compute the actual value bounded by spectral moments -- the right-hand side of equations \ref{ineq:m2} and \ref{eq:4.3.1}. Figure \ref{fig:moments_bound} shows the first 30 graphs from coauth-DBLP, \ref{fig:moments_bound}(a) captures equation \ref{ineq:m2} while \ref{fig:moments_bound}(b) captures equation \ref{eq:4.3.1}. In most cases, the calculated values from graph properties are not only bounded but also close to spectral moments, which indicates the effectiveness of $m_2$ and $m_3$. However, there are still a few cases in which the bounded values are much lower than the spectral moments. \vspace{-4mm}

\subsection{Higher-Order Graph Classification}
\vspace{-2mm}In order to test the effectiveness of spectral moments, we perform various graph classification tasks on the generated subgraphs of 15 higher-order graphs we collected. There are eight types of classes in total, so we select the first graph from each class and put their subgraphs into a dataset. For classes that contain more than one graph, we further perform classifications on their subgraphs to test whether the proposed method can distinguish subgraphs of similar categories. 

Similarly, we perform the same classification on the \textit{downgraded} dyadic graphs (as discussed in Section \ref{sec:secExperimental}), applying the following state-of-the-art methods~\footnote{For the first two kernel methods, we applied GraKeL, and for the last two methods, we directly used the code provided by the authors.}: Shortest Path Kernel~\cite{1565664}, W-L Subtree Kernel~\cite{JMLR:v12:shervashidze11a}, RetGK~\cite{arxiv.1809.02670}, and PATCHY-SAN (PSCN)~\cite{pmlr-v48-niepert16}. All methods are 10-fold cross-validated, and we report the average accuracies and run them 10 times with different random splits.

\begin{table*}[t]
\centering
\begin{adjustbox}{width=0.85\textwidth}
\begin{tabular}{@{}|c|c|c|c|c|c|c|@{}}
\hline
Tasks & Labels & Spectral Moments & Shortest Path & W-L Subtree & RetGK & PSCN \\
\hline
8-classes &  8 & \textbf{97.5(0.1)} & 74.2(0.2) & 88.3(0.5) & 95.2(0.2) & 86.7(0.8)\\
coauth & 3 & \textbf{73.4(0.4)} & 54.5(0.2) & 69.1(0.6) & 71.8(0.4) & 55.6(1.0)\\
contact & 2 & \textbf{98.1(0.2)} & 96.9(0.4) & 89.1(1.0) & 97.3(0.4) & 67.7(2.3)\\
email & 2 & \textbf{97.5(0.3)} & 92.8(0.3) & 92.5(1.1) & 93.9(0.3) & 61.7(3.5)\\
tags & 3 & \textbf{78.9(1.2)} & 58.8(0.7) & 45.1(1.5) & 50.1(2.0) & 40.6(3.2)\\
twitter & 2 & \textbf{99.2(0.3)} & 81.5(0.4) & 85.1(1.7) & 89.5(0.8) & 95.7(2.1)\\
\hline
\end{tabular}
\end{adjustbox}
\caption{Accuracy (standard deviation) of Higher-Order Graph Classification\vspace{-8mm}}
\label{tab2:accs_hon}
\end{table*}

As shown in Table~\ref{tab2:accs_hon}, in all tasks, the proposed method \textit{Spectal Moments} on higher-order graphs outperforms other baselines on downgraded graphs. In sum, \textit{higher-order spectrums provide more information on graph structure}.

\vspace{-4mm}\subsection{Assessing the Required Number of Moments}
\vspace{-2mm}We only select three spectral moments of each random walk transition matrix as features. Is that enough? For general classification tasks, there is a trade-off between the number of training attributes and performance. Since spectral moments are defined as the expectation of return probabilities in Equation~\ref{eq2:spectral_moments}, they are not completely independent of each other. Higher-order moments are less informative when the graph is relatively small. For example, if the graph is as simple as a triangle, it is easy to see $m_1=0$ and $m_{n+1}=(1-m_{n})\times 0.5$ for $n=1,2,\dots$. While the first few moments are different, the following moments start to converge to the stationary probability very quickly, which is $1/3$. 

To empirically show whether the number of moments is enough for classification, we increase the number of moments used in each task in Table~\ref{tab2:accs_hon}. Figure~\ref{fig1:accs_via_moments}, plots the accuracy variations by increasing the number of moments. The number of spectral moments ranges from 1 (only $m_2$) to 14 ($m_2$ to $m_{15}$). For tasks \textit{8-classes, coauth, tags} and \textit{twitter}, we reach the best performance using the first few moments and remain stable for more moments. And for tasks \textit{contact} and \textit{email}, the results fluctuate or even drop as we increase the number of moments. Recall that spectral moments indicate the expectation of return probabilities of random walks of length $l$. Under the graph sizes that we sampled (50 to 200), higher-order moments are less likely to help distinguish one graph from others. We conclude that \textit{higher-order graphs can be sufficiently represented as features using the first few (i.e., first 2-4) spectral moments.}

\begin{figure*}[tbp]
\begin{minipage}{.61\textwidth}
\centering
	\resizebox{0.49\linewidth}{!}{
\begin{tikzpicture}

\definecolor{darkgray176}{RGB}{176,176,176}
\definecolor{steelblue31119180}{RGB}{31,119,180}

\begin{axis}[
height=4cm,
tick align=outside,
tick pos=left,
width=16cm,
x grid style={darkgray176},
xmin=0.35, xmax=14.65,
xtick style={color=black},
xlabel={Number of Moments},
y grid style={darkgray176},
ylabel={ACCs},
ymin=0.979625263409962, ymax=0.98867023467433,
ticklabel style = {font=\large},
label style={font=\fontsize{17pt}{1em}\color{white!15!black}\selectfont},
ytick style={color=black}
]
\addplot [semithick, steelblue31119180]
table {%
1 0.980036398467433
2 0.986393678160919
3 0.988259099616858
4 0.987637931034482
5 0.987430555555555
6 0.987636973180077
7 0.987430076628352
8 0.987568007662835
9 0.987568007662835
10 0.987430076628353
11 0.987361111111111
12 0.987499042145594
13 0.987015804597701
14 0.98694683908046
};
\end{axis}

\end{tikzpicture}}
	\resizebox{0.49\linewidth}{!}{
\begin{tikzpicture}

\definecolor{darkgray176}{RGB}{176,176,176}
\definecolor{steelblue31119180}{RGB}{31,119,180}

\begin{axis}[
height=4cm,
tick align=outside,
tick pos=left,
width=16cm,
x grid style={darkgray176},
xmin=0.35, xmax=14.65,
xtick style={color=black},
xlabel={Number of Moments},
y grid style={darkgray176},
ylabel={ACCs},
ymin=0.70548568729334, ymax=0.735994709494568,
ticklabel style = {font=\large},
label style={font=\fontsize{16pt}{1em}\color{white!15!black}\selectfont},
ytick style={color=black}
]
\addplot [semithick, steelblue31119180]
table {%
1 0.706872461029759
2 0.723281530467643
3 0.727056211620217
4 0.730213037316958
5 0.731382616910723
6 0.732765233821445
7 0.732148795465281
8 0.731251771374587
9 0.732276334435522
10 0.731314596126594
11 0.729538025507794
12 0.730356636750118
13 0.730086915446386
14 0.734607935758148
};
\end{axis}

\end{tikzpicture}}
	\resizebox{0.49\linewidth}{!}{
\begin{tikzpicture}

\definecolor{darkgray176}{RGB}{176,176,176}
\definecolor{steelblue31119180}{RGB}{31,119,180}

\begin{axis}[
height=4cm,
tick align=outside,
tick pos=left,
width=16cm,
x grid style={darkgray176},
xmin=0.35, xmax=14.65,
xtick style={color=black},
xlabel={Number of Moments},
y grid style={darkgray176},
ylabel={ACCs},
ymin=0.968121428571429, ymax=0.978021428571429,
ticklabel style = {font=\large},
label style={font=\fontsize{16pt}{1em}\color{white!15!black}\selectfont},
ytick style={color=black}
]
\addplot [semithick, steelblue31119180]
table {%
1 0.977571428571429
2 0.972571428571429
3 0.975428571428572
4 0.974047619047619
5 0.974761904761905
6 0.974761904761905
7 0.975428571428572
8 0.973428571428572
9 0.972666666666667
10 0.973428571428572
11 0.968571428571429
12 0.968571428571429
13 0.969285714285714
14 0.969285714285714
};
\end{axis}

\end{tikzpicture}}
	\resizebox{0.49\linewidth}{!}{
\begin{tikzpicture}

\definecolor{darkgray176}{RGB}{176,176,176}
\definecolor{steelblue31119180}{RGB}{31,119,180}

\begin{axis}[
height=4cm,
tick align=outside,
tick pos=left,
width=16cm,
x grid style={darkgray176},
xmin=0.35, xmax=14.65,
xlabel={Number of Moments},
xtick style={color=black},
y grid style={darkgray176},
ylabel={ACCs},
ymin=0.96675, ymax=0.97225,
ticklabel style = {font=\large},
label style={font=\fontsize{16pt}{1em}\color{white!15!black}\selectfont},
ytick style={color=black}
]
\addplot [semithick, steelblue31119180]
table {%
1 0.9705
2 0.972
3 0.967
4 0.967
5 0.9695
6 0.969
7 0.9705
8 0.97
9 0.97
10 0.9695
11 0.97
12 0.97
13 0.97
14 0.9695
};
\end{axis}

\end{tikzpicture}}
	\resizebox{0.49\linewidth}{!}{
\begin{tikzpicture}

\definecolor{darkgray176}{RGB}{176,176,176}
\definecolor{steelblue31119180}{RGB}{31,119,180}

\begin{axis}[
height=4cm,
tick align=outside,
tick pos=left,
width=16cm,
x grid style={darkgray176},
xmin=0.35, xmax=14.65,
xtick style={color=black},
xlabel={Number of Moments},
y grid style={darkgray176},
ylabel={ACCs},
ymin=0.598878735632184, ymax=0.803592528735632,
ticklabel style = {font=\large},
label style={font=\fontsize{16pt}{1em}\color{white!15!black}\selectfont},
ytick style={color=black}
]
\addplot [semithick, steelblue31119180]
table {%
1 0.608183908045977
2 0.784919540229885
3 0.786275862068966
4 0.788252873563218
5 0.78864367816092
6 0.789597701149425
7 0.794287356321839
8 0.792632183908046
9 0.790942528735632
10 0.78967816091954
11 0.785275862068965
12 0.783563218390805
13 0.782931034482759
14 0.785632183908046
};
\end{axis}

\end{tikzpicture}}
	\resizebox{0.49\linewidth}{!}{
\begin{tikzpicture}

\definecolor{darkgray176}{RGB}{176,176,176}
\definecolor{steelblue31119180}{RGB}{31,119,180}

\begin{axis}[
height=4cm,
tick align=outside,
tick pos=left,
width=16cm,
x grid style={darkgray176},
xlabel={Number of Moments},
xmin=0.35, xmax=14.65,
xtick style={color=black},
y grid style={darkgray176},
ylabel={ACCs},
ymin=0.984968421052631, ymax=0.994347368421053,
ticklabel style = {font=\large},
label style={font=\fontsize{16pt}{1em}\color{white!15!black}\selectfont},
ytick style={color=black}
]
\addplot [semithick, steelblue31119180]
table {%
1 0.985394736842105
2 0.993921052631579
3 0.993921052631579
4 0.993921052631579
5 0.993921052631579
6 0.993921052631579
7 0.993921052631579
8 0.993921052631579
9 0.993921052631579
10 0.993921052631579
11 0.993921052631579
12 0.993921052631579
13 0.993921052631579
14 0.993921052631579
};
\end{axis}

\end{tikzpicture}}
    \captionof{figure}{Accuracy changes with more spectral moments used as features. From top left to bottom right: \textit{8-classes, coauth, contact, email, tags and twitter}. }
    \label{fig1:accs_via_moments}            
\end{minipage}
\quad
\begin{minipage}{.34\textwidth}
    \enspace
    \resizebox{\linewidth}{!}{
\begin{tikzpicture}

\definecolor{crimson2143940}{RGB}{214,39,40}
\definecolor{darkgray176}{RGB}{176,176,176}
\definecolor{darkorange25512714}{RGB}{255,127,14}
\definecolor{forestgreen4416044}{RGB}{44,160,44}
\definecolor{lightgray204}{RGB}{204,204,204}
\definecolor{mediumpurple148103189}{RGB}{148,103,189}
\definecolor{steelblue31119180}{RGB}{31,119,180}

\begin{axis}[
height=8cm,
legend cell align={left},
legend style={
  fill opacity=0.8,
  draw opacity=1,
  text opacity=1,
  at={(0.03,0.97)},
  anchor=north west,
  draw=lightgray204
},
tick align=outside,
tick pos=left,
width=11cm,
x grid style={darkgray176},
xlabel={Graph Size},
xmin=-0.2, xmax=4.2,
xtick style={color=black},
xtick={0,1,2,3,4},
xtick={0,1,2,3,4},
xtick={0,1,2,3,4},
xtick={0,1,2,3,4},
xtick={0,1,2,3,4},
xticklabels={5-50,51-100,101-200,201-400,401-800},
xticklabels={5-50,51-100,101-200,201-400,401-800},
xticklabels={5-50,51-100,101-200,201-400,401-800},
xticklabels={5-50,51-100,101-200,201-400,401-800},
xticklabels={5-50,51-100,101-200,201-400,401-800},
y grid style={darkgray176},
ylabel={ACCs},
ymin=0.428995, ymax=0.952705,
ticklabel style = {font=\large},
label style={font=\fontsize{15pt}{1em}\color{white!15!black}\selectfont},
ytick style={color=black}
]
\addplot [semithick, steelblue31119180, mark=*, mark size=3, mark options={solid}]
table {%
0 0.5863
1 0.6959
2 0.7429
3 0.8543
4 0.9289
};
\addlegendentry{Spectral Moments}
\addplot [semithick, darkorange25512714, mark=*, mark size=3, mark options={solid}]
table {%
0 0.5811
1 0.6506
2 0.7194
3 0.7854
4 0.8188
};
\addlegendentry{RetGK}
\addplot [semithick, forestgreen4416044, mark=*, mark size=3, mark options={solid}]
table {%
0 0.5522
1 0.6897
2 0.7351
3 0.8386
4 0.8817
};
\addlegendentry{W-L Subtree}
\addplot [semithick, crimson2143940, mark=*, mark size=3, mark options={solid}]
table {%
0 0.4528
1 0.5483
2 0.5917
3 0.6649
4 0.6567
};
\addlegendentry{Shortest Path}
\addplot [semithick, mediumpurple148103189, mark=*, mark size=3, mark options={solid}]
table {%
0 0.5008
1 0.5314
2 0.5614
3 0.6221
4 0.6432
};
\addlegendentry{PSCN}
\end{axis}

\end{tikzpicture}}
    \captionof{figure}{Accuracies change with various sampled subgraphs sizes. Here, we use \textit{coauth} as an example.}
    \label{fig2:accs_via_size}            
\end{minipage}
\end{figure*}

\vspace{-4mm}\subsection{Preformances with Variations in Graph Size}
\vspace{-3mm}To observe the variations in performance with different graph sizes, we consider the sampled subgraph size ranges: 5-50, 51-100, 101-200, 201-400, and 401-800. We use the "coath" dataset as an example. Figure~\ref{fig2:accs_via_size} illustrates the accuracy of all methods across different graph sizes. We observe that, for all methods, the accuracy increases as the sampled graph size grows larger. This can be attributed to larger graphs containing more information about local structures, which becomes more challenging for fixed-length representations compared to kernel methods.
The kernel methods rely on capturing every single substructure of the entire graph, while our proposed method focuses on extracting the expectation of return probabilities. Despite this difference, our method continues to perform well and exhibits a similar trend to the two kernel methods. On the contrary, the \textit{Shortest Path} and \textit{PSCN} methods experience a performance bottleneck in the size range of 401-800, as depicted in Figure~\ref{fig2:accs_via_size}.

\vspace{-4mm}\section{Conclusion}
\label{sec:discussion}
\vspace{-4mm}We present a comprehensive spectral representation for higher-order networks by leveraging higher-order spectral theory. This representation enables us to compute spectral moments from equivalent matrices of dyadic weighted graphs. We have proved the properties of our proposed features, which provide upper bounds on specific hypergraph properties, including degree and clustering information (number of triangles). These bounds allow us to gain insight into the structural characteristics of higher-order networks. 
%
%
%
%

\vspace{0.5mm}\noindent \textbf{Acknowledgements.} This research was supported in part by the National Science Foundation under award CAREER IIS-1942929.

\vspace{-4mm}\bibliographystyle{splncs04}
\bibliography{reference}
\vspace{-4mm}
\end{document}